\definecolor{darkblue}{rgb}{0,0,0.5}
\newcounter{lemmacounter}
\newenvironment{lemma}[1][]{\refstepcounter{lemmacounter}
   {{\em Lemma~\thelemmacounter #1}.---} \rmfamily}
\newcounter{theoremcounter}
\newenvironment{theorem}[1][]{\refstepcounter{theoremcounter}
   {{\em Theorem~\thetheoremcounter #1}.---} \rmfamily}
\newcounter{propositioncounter}
\newenvironment{proposition}[1][]{\refstepcounter{propositioncounter}
   {{\em Proposition~\thepropositioncounter #1}.---} \rmfamily}
\newcounter{conjecturecounter}
\newcounter{definitioncounter}
\newenvironment{definition}[1][]{\refstepcounter{definitioncounter}
   {{\em Definition~\thedefinitioncounter #1}.---} \rmfamily}
\def\be{\begin{equation}}
\def\ee{\end{equation}}
\def\ba{\begin{eqnarray}}
\def\ea{\end{eqnarray}}
\newcommand{\calI}{{\cal I}}
\newcommand{\tr}{{\rm Tr}}
\newcommand{\eq}[1]{(\hyperref[eq:#1]{\ref*{eq:#1}})}
\renewcommand{\sec}[1]{\hyperref[sec:#1]{Section~\ref*{sec:#1}}}
\newcommand{\thrm}[1]{\hyperref[thm:#1]{Theorem~\ref*{thm:#1}}}
\newcommand{\lemm}[1]{\hyperref[lemm:#1]{Lemma~\ref*{lemm:#1}}}
\newcommand{\pro}[1]{\hyperref[pro:#1]{Proposition~\ref*{pro:#1}}}
\newcommand{\corr}[1]{\hyperref[corr:#1]{Corollary~\ref*{corr:#1}}}
\newcommand{\deff}[1]{\hyperref[deff:#1]{Definition~\ref*{deff:#1}}}
\newcommand{\fig}[1]{\hyperref[fig:#1]{\ref*{fig:#1}}}
\newcommand{\tbl}[1]{\hyperref[fig:#1]{\ref*{tbl:#1}}}
\DeclareMathOperator{\Tr}{Tr}
\begin{document}

\preprint{APS/123-QED}

\title{No-broadcasting of non-Gaussian states}

\author{Kaustav Chatterjee}
 \email{kauch@dtu.dk}
\author{Ulrik Lund Andersen}%
\affiliation{%
Center for Macroscopic Quantum States (bigQ), Department of Physics, Technical University of Denmark, \\
 Building 307, Fysikvej, 2800 Kongens Lyngby, Denmark 
}%


\date{\today}

\begin{abstract}
Gaussian states are of fundamental importance in the physics of continuous-variable quantum systems.
They are appealing for the experimental ease with which they can be produced, and for their compact
and an elegant mathematical description. Nevertheless, many proposed quantum technologies require us to go beyond the realm of Gaussian states and introduce non-Gaussian elements. In terms of quantum resource theory, we can then recognize non-Gaussian states as resources and Gaussian operations and states as free, which can be used and prepared easily. Given such a structure of resource theory, the task of broadcasting the resource is to determine if the resource content of a state can be cloned in a meaningful way, which, if possible, provides a strong operation for manipulation of the resource. In this work, we prove that broadcasting of non-Gaussian states via Gaussian operations is not possible. For this, we first show that the relative entropy of non-Gaussianity is not super-additive, which rules it out as a prime candidate in the analysis of such no-go results. Our proof is then based on understanding fixed points of Gaussian operations and relates to the theory of control systems. The no-go theorem also states that if two initially uncorrelated systems interact by Gaussian dynamics and non-Gaussianity is created at one subsystem, then the non-Gaussianity of the other subsystem must be reduced. Further, keeping the set of free operations fixed to Gaussian operations, we can also comment on the broadcasting of Wigner negativity and genuine quantum non-Gaussianity.   
\end{abstract}

\maketitle


\section{\label{sec:level1}Introduction\protect}
Continuous-variable (CV) quantum information \cite{asera,andersen2010continuousvariablequantuminformation,Braunstein_2005} leverages the infinite-dimensional Hilbert space of bosonic modes, like the quantized electromagnetic field or vibrational modes of a mechanical oscillator. Rooted in quantum optics, these systems offer a powerful and complementary framework for quantum information processing alongside traditional discrete-variable approaches.

Gaussian states and operations are central to continuous-variable quantum information processing. Even though these states exist in an infinite-dimensional Hilbert space, their Gaussian nature of the characteristic functions allows us to derive analytical results with relative ease. Moreover, their experimental accessibility makes them a practical choice for implementing vital quantum protocols like teleportation \cite{Pirandola_2006}, enhanced sensing \cite{Pirandola_2018}, and key distribution \cite{Zhang_2024}. Unfortunately, such Gaussian schemes are limited in their power of CV quantum information processing. It has been shown that non-Gaussianity in the form of either non-Gaussian states or non-Gaussian operations is required for entanglement distillation \cite{ED1,ED2}, error correction \cite{ec1}, loophole-free violation of Bell’s inequality \cite{be1}, and universal quantum computation \cite{Qc1,Qc2}. These considerations elevate non-Gaussianity to a resource that can be quantitatively accounted for using the framework of quantum resource theory \cite{Chitambar_2019}.

A fundamental question in any resource theory is whether the resource can be \emph{broadcast} or cloned using only free operations. The concept of broadcasting generalizes cloning by demanding that a single resource state be distributed to two or more parties such that each party ends up holding a share of that resource.
No-broadcasting theorems are known for various quantum resources. For example, it has been established that the total correlations in a bipartite state can be broadcast if and only if the state is classical-classical, while quantum correlations can only be broadcast if the state exhibits a classical-quantum structure \cite{EB1, EB3,EB4,EB5}. Moreover, although the entanglement of bipartite states may be partially broadcast via local operations, exact broadcasting is not possible \cite{EB2}. Furthermore, neither the coherence nor the asymmetry of quantum states can be broadcast \cite{Asy1,asy2}, and broadcasting any magic state through stabilizer operations is ruled out \cite{magic}. But broadcasting imaginarity is again possible \cite{imagine}. Finally, the broadcasting of thermodynamic athermality is dependent on the bath temperature: it is impossible at any positive temperature but becomes feasible at absolute zero \cite{son2024robustcatalysisresourcebroadcasting}. These examples indicate that the task of broadcasting varies significantly based on the resource under consideration. In this work, we analyse the broadcasting problem for non-Gaussianity resource in CV quantum systems. We prove that non-Gaussian quantum states cannot be broadcast by Gaussian operations. 

Conceptually, our no-broadcasting theorem reveals a foundational limitation on Gaussian dynamics: these operations are so “resource-non-generative” that they cannot even distribute existing non-Gaussianity between two modes. This result is analogous to the no-broadcasting of quantum correlations, but cast in terms of a resource (non-Gaussianity) and a restricted class of operations (Gaussian channels). The core insight behind the proof is that Gaussian channels are \emph{strictly contractive} on phase-space distributions, meaning they inevitably drive states towards the Gaussian manifold. 
We also discuss this no-go theorem through the lens of resource degradability, as done for asymmetry \cite{Asy1}. We further discuss what implications it has on resources like Wigner negativity and genuine non-Gaussianity \cite{Walschaers_2021}. The remainder of the work is organized as follows. In Sec. II, we review the CV quantum information formalism and the corresponding resource theory of non-Gaussianity. In Sec. III, we discuss the task of broadcasting and properties of resource measures that relate to the task. Then we derive some no-broadcasting results for non-Gaussian states by analyzing the fixed points of Gaussian operations. In Sec. IV, we conclude with a discussion and summary. In the Appendices, we present detailed mathematical proofs of the main results.    

\section{\label{sec:level2} PRELIMINARIES\protect}
\subsection{\label{sec:level2}Gaussian states and Gaussian operations}
An $N$-mode bosonic continuous-variable system is described by annihilation operators $\left\{\hat{a}_k, 1\le k \le N\right\}$, which satisfy the commutation relation $\left[\hat{a}_k,\hat{a}_j^\dagger\right]=\delta_{kj}, \left[\hat{a}_k,\hat{a}_j\right]=0$. Equivalently, one can define 2N real quadrature field operators $\hat{q}_k=\frac{1}{\sqrt{2}}(\hat{a}_k+\hat{a}_k^\dagger), \hat{p}_k=\frac{i}{\sqrt{2}}\left(\hat{a}_k^\dagger-\hat{a}_k\right)$ and collect them into the real vector
$\hat{x}=\left(\hat{q}_1,\hat{q}_2,\cdots,\hat q_N, \hat{p}_1,\hat{p}_2,\cdots,\hat p_N\right)$. This vector satisfies the canonical commutation relation
$
\left[\hat{x}_i,\hat{x}_j\right]=i {  \Omega}_{ij}.
$
where $\Omega=\begin{pmatrix}
    0 & I_n\\
    -I_n & 0\\
\end{pmatrix}$. A quantum state $\rho$ can be conveniently described by its (symmetrically ordered) characteristic function
\be 
\chi\left({  \xi};\rho\right)=\tr \left[\rho \hat{D}\left({  \xi}\right)\right],
\label{Wigner_characterisitic_function}
\ee
where 
$
\hat{D}\left({  \xi}\right)=\exp\left(i\hat x{  \xi}\right)$
is the multi-mode Weyl displacement operator and $  \xi=\left(\xi_1,\cdots \xi_{2N}\right)^T\in \mathbb{R}^{2N}$ is a phase-space vector. A state $\rho$ is Gaussian if and only if its characteristic function has the Gaussian form \cite{asera}
\be
\chi \left({  \xi};\rho\right)=\exp\left(-\frac{1}{4}{  \xi}^T \left( {  \Lambda }\right){  \xi}+i {  \overline{x}}^T{  \xi}\right).
\ee
Here $\overline{  x}^T=\braket{  \hat{x}}_{\rho}$ is the state's mean vector and 
$
{  \Lambda}_{ij}=\braket{\{\hat{x}_i-\overline{x}_i,\hat{x}_j-\overline{x}_j\}}_\rho$
is its covariance matrix, with $\{,\}$ denoting the anticommutator. 
Thus, every Gaussian state is completely characterized by $\overline{  x}$ and $\Lambda$. 

The class of completely positive trace-preserving (CPTP) maps (quantum channels) that transform Gaussian states to Gaussian states are called Gaussian quantum operations. Denoting the set of Gaussian operations by $\mathcal{O}$ and the set of Gaussian states by $\mathcal{F}_G$, one has $\mathcal{O}(\mathcal{F}_G)=\mathcal{F}_G$ \cite{asera,Caruso_2008,eisert2005gaussianquantumchannels}. Any quantum channel $\mathcal{E}\in\mathcal{O}$ admits a Stinespring dilation of the form $\mathcal{E}(\rho)=\tr_E(\hat{U}_{  S,  d}(\rho\otimes\gamma_E)\hat{U}_{  S,  d}^\dagger)$, where $\hat{U}_{  S,  d}$ is a Gaussian unitary operation and $\gamma$ is a Gaussian state of the environment. Without loss of generality, the environmental state can be taken to have the same number of modes as the system.
A Gaussian unitary $\hat{U}_{  S,  d}$ transforms
\be
\hat{U}_{  S,  d}^\dagger \hat{  x} \hat{U}_{  S,  d}= \hat{  x}S+{  d},
\ee
where ${  d}=\left(d_1,\cdots, d_{2N}\right)$ is a displacement vector and $S$ is a symplectic matrix satisfying $S\Omega S^T=\Omega$. This form is different from the standard forms because the operator $\hat x$ in our case is a row vector. Consequently, a Gaussian unitary corresponds to a linear coordinate transform on the Wigner characteristic function,
\ba\label{t3}
\chi \left({  \xi};\hat{U}_{  S,  d} \rho \hat{U}_{  S,  d}^\dagger\right)&=&\chi \left({  S}{  \xi};\rho\right) \exp\left(i { d^T} {  \xi }\right).
\ea
Most of the properties and proofs that we discuss are independent of the displacement $d$, and we will omit it unless stated otherwise. Given a Gaussian operation with dilation $\hat{U}_{S}$ and environment $\gamma$, its action on the covariance matrix is 
\be\label{x2}
\Lambda\to X^T\Lambda X +Y
\ee
where $X=s_1^T$ and $Y=s_2\Lambda_\gamma s_2^T$ where $s_1,s_2$ are matrices that come from a global symplectic $S=\begin{pmatrix} s_1 & s_2\\
s_3&s_4\end{pmatrix}$ and $\Lambda_\gamma$ is the covariance matrix of the environment state. This can be seen from the Stinespring dilation of the channel, which maps $\rho\to \tr_E(\hat{U}_S\rho\otimes\gamma\hat{U}^\dagger_S)$, which on the level of the covariance matrix first maps as $\begin{pmatrix} s_1 & s_2\\
s_3&s_4\end{pmatrix}\begin{pmatrix} \Lambda & 0\\
0&\Lambda_\gamma\end{pmatrix}\begin{pmatrix} s_1^T & s_3^T\\
s_2^T&s_4^T\end{pmatrix}$ and the tracing out projects into the system part to give $\Lambda\to s_1\Lambda s_1^T+s_2\Lambda_\gamma s_2^T$. Notice that as $\Lambda_\gamma>0$, $Y\geq 0$ with $Y=0$ if and only if $s_2=0$. Also, throughout the paper, our conventions for phase space vectors and multi-partite splittings would be adopted to \cite{Caruso_2008}.  

Since we will frequently use the Wigner characteristic function, we summarize three important properties:
\begin{enumerate}
    \item Transformation under Gaussians \cite{asera}:
    \be\label{chip1x}
    \chi \left({  \xi};\mathcal{E}(\rho)\right)=\exp \left( -\frac{1}{4}\xi^T Y \xi\right)\chi \left( X{  \xi};\rho\right)
    \ee
    \item Tensor product states:
    \be\label{chip2}
    \chi\left({  \xi}_A,{  \xi}_B;\rho_A\otimes\rho_B\right)=\chi\left({  \xi}_A;\rho_A\right)\chi\left({  \xi}_B;\rho_B\right)
    \ee
    \item Partial trace: For a bipartite state $\rho_{AB}$ with characteristic function $\chi\left({  \xi}_A,{  \xi}_B;\rho_{AB}\right)$,
    \be\label{chip3}
    \chi\left({  \xi}_B;\rho_B\right)=\chi\left(0,{  \xi}_B;\rho_{AB}\right)
    \ee
\end{enumerate}
Proofs of properties $2$ and $3$ can be found in the appendix \ref{appen1}. Finally, note that because of state normalization, we always have $\chi\left(0;\rho\right)=1$, which follows directly from the definition (\ref{Wigner_characterisitic_function}).

\subsection{\label{sec:level2}Resource theory framework} 
Resource theories provide a systematic framework for quantifying and managing specific valuable quantities within a given context. In these frameworks, states lacking the resource are termed free states, while operations that cannot generate the resource from free states are known as free operations. Notably, free operations leave the set of free states unchanged. Given a set of free states, what operations are deemed free may vary, and it is possible to define different classes of free operations corresponding to the same class of free states. A prime example is the resource theory of quantum coherence \cite{qcoh}. In our setting, we primarily consider $\mathcal{F}_G$ as the set of free states and $\mathcal{O}$ as the set of free operations. It is important to note that attaching arbitrary Gaussian states (using tensor product to increase system size) and tracing out an arbitrary number of modes are both included in $\mathcal{O}$. This resource theory is not convex in the sense that a convex combination of free states can itself be resourceful. In this regard, one can define two additional notions of free states:
\begin{enumerate}
    \item $\mathcal{F}^c_G$, the convex hull of $\mathcal{F}_G$.
    \item $\mathcal{F}_W$, the set of states with non-negative Wigner functions, where the Wigner function is defined as the Fourier transform of the Wigner characteristic function.
    
\end{enumerate}
It is interesting to note that $\mathcal{F}_G\subsetneq \mathcal{F}^c_G\subsetneq \mathcal{F}_W$ \cite{rt1} and $\mathcal{O}$ leaves all sets invariant (see Appendix \ref{appen1} for a proof). Given any set of free states $\mathcal{F}$, a measure of resource can be introduced using the relative entropy $D(\rho|\sigma):=\tr(\rho\log \rho)-\tr(\rho\log \sigma) $, which is given by
\be
\mathcal{M}(\rho)\equiv \min_{\sigma\in \mathcal{F}} D(\rho|\sigma)
\ee
Such a measure automatically satisfies the properties \cite{Chitambar_2019} of positivity ($\mathcal{M}(\rho)\geq0$), faithfulness ($\mathcal{M}(\rho)=0\iff\rho\in\mathcal{F}$) and monotonicity ($\mathcal{M}(\rho)\geq\mathcal{M}(\mathcal{E}(\rho))$ whenever $\mathcal{E}$ is a free operation). In our case, with $\mathcal{F}_G$ as free states, the measure reduces to the measure of non-Gaussianity, denoted as $NG(\rho)$. For any non-Gaussian state $\rho$ with covariance matrix $\Lambda$, its closest Gaussian state is denoted by $\Gamma_\rho$, which has the same covariance matrix $\Lambda$. This allows us to write \cite{NG1}
\be\label{x1}
NG(\rho)=S(\Gamma_\rho)-S(\rho)
\ee
where $S(\rho)$ is the Von-Neumann entropy of the state $\rho$.

\section{Broadcasting of non-Gaussianity}
Let us begin by defining the task of broadcasting. 

\begin{definition}
   Non-Gaussianity can be broadcast if there exists a Gaussian operation $\mathcal{E}$ and a non-Gaussian state $\rho_A$ such that the output state $\mathcal{E}(\rho_A)=\sigma_{AB}$ satisfies $\Tr_B(\sigma_{AB})=\rho_A$ and $\Tr_A(\sigma_{AB})$ is non-Gaussian.
\end{definition}
A schematic of the broadcasting task is illustrated in Figure 1. This definition is the most general one possible, and directly extends to any quantum resource theory. The operation can explicitly be written as $\mathcal{E}(\rho)=\tr_C(\hat{U}_{S}(\rho\otimes\gamma_{BC})\hat{U}_{  S}^\dagger)$ where $\hat{U}_{  S}$ is a Gaussian unitary and $BC$ (the environment $E\equiv BC$) is in a Gaussian state $\gamma$.
\begin{figure}[h]
\includegraphics[scale=0.8]{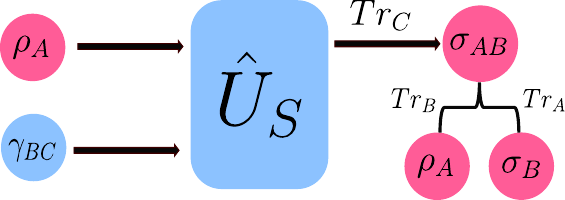}
\caption{\label{fig:epsart} Broadcasting protocol. Everything in blue is Gaussian (including the unitary), and states in pink need to be non-Gaussian for a successful broadcasting protocol. }
\end{figure}
In regard to the task of broadcasting, we have the following proposition:

\begin{proposition}
    Any resource theory that admits a positive, monotonic, faithful, and super-additive measure of resource cannot be broadcast. Here, super-additivity of a measure $\mathcal{M}$ means $\mathcal{M}(\sigma_{AB})\geq \mathcal{M}(\sigma_A)+\mathcal{M}(\sigma_B)$
\end{proposition}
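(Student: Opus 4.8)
The plan is to argue by contradiction, combining super-additivity with the monotonicity and faithfulness of the measure. Suppose non-Gaussianity (or, abstractly, the resource) can be broadcast: there is a free operation $\mathcal{E}$ and a resourceful state $\rho_A$ with $\mathcal{E}(\rho_A)=\sigma_{AB}$ such that $\Tr_B(\sigma_{AB})=\rho_A$ and $\Tr_A(\sigma_{AB})$ is resourceful. The first step is to bound $\mathcal{M}(\sigma_{AB})$ from above using monotonicity: since $\mathcal{E}$ is a free operation, $\mathcal{M}(\sigma_{AB})=\mathcal{M}(\mathcal{E}(\rho_A))\le\mathcal{M}(\rho_A)$. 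The second step is to bound $\mathcal{M}(\sigma_{AB})$ from below using super-additivity: $\mathcal{M}(\sigma_{AB})\ge\mathcal{M}(\sigma_A)+\mathcal{M}(\sigma_B)$, where $\sigma_A=\Tr_B(\sigma_{AB})=\rho_A$ and $\sigma_B=\Tr_A(\sigma_{AB})$. Chaining these gives $\mathcal{M}(\rho_A)\ge\mathcal{M}(\rho_A)+\mathcal{M}(\sigma_B)$, hence $\mathcal{M}(\sigma_B)\le 0$. By positivity $\mathcal{M}(\sigma_B)=0$, and then faithfulness forces $\sigma_B$ to be free, contradicting the assumption that $\Tr_A(\sigma_{AB})$ is resourceful.

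One subtlety to address explicitly is that the partial trace must itself be a free operation for the monotonicity step to be legitimate — but this is guaranteed in the present setting, since tracing out modes is included in $\mathcal{O}$ (as noted in the preliminaries), and more generally any resource theory in which we can sensibly speak of super-additivity of a bipartite measure comes equipped with free marginals. I would state this as a standing assumption. A second point worth a sentence: one could equally run the argument symmetrically to conclude $\mathcal{M}(\rho_A)=\mathcal{M}(\sigma_A)+\mathcal{M}(\sigma_B)$ must degenerate, but the one-sided version above already suffices, so I would keep the proof minimal.

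There is no real obstacle here — the proposition is essentially a packaging of standard resource-theoretic inequalities, and the proof is three lines. The only thing that requires care is making sure the measure used in the monotonicity step and in the super-additivity step is the same object and that $\sigma_{AB}$ lies in its domain (which is automatic for the relative-entropy-based measure, finite or not, since the inequality $\mathcal{M}(\rho_A)\ge\mathcal{M}(\sigma_B)+\mathcal{M}(\rho_A)$ is still valid when $\mathcal{M}(\rho_A)=+\infty$ only if one is careful — so I would implicitly assume $\mathcal{M}(\rho_A)<\infty$, which holds for the non-Gaussianity measure on states with finite energy, the physically relevant case). With that caveat noted, the contradiction is immediate and the proof concludes.
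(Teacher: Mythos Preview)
Your proof is correct and follows essentially the same route as the paper: assume broadcasting succeeds, chain monotonicity of $\mathcal{M}$ under the free operation $\mathcal{E}$ with super-additivity to obtain $\mathcal{M}(\rho_A)\ge\mathcal{M}(\rho_A)+\mathcal{M}(\sigma_B)$, and then invoke positivity and faithfulness to force $\sigma_B$ free, a contradiction. Your additional remarks on partial trace being free and on finiteness of $\mathcal{M}(\rho_A)$ are sensible caveats, but the core argument is identical to the paper's.
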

\begin{proof}
    We prove it by contradiction. Suppose there exist a free operation $\mathcal{E}$ and a resourceful state $\rho_A$ for which broadcasting is possible, then it must hold that for $\mathcal{E}(\rho_A)=\sigma_{AB}$ we have $\sigma_A=\rho_A$ and $\mathcal{M}(\sigma_B)>0$ (because $\sigma_B$ must be resourceful and hence by positivity and faithfulness of the measure $\mathcal{M}(\sigma_B)>0$). Then by monotonicity of the measure, $\mathcal{M}(\rho_A)\geq\mathcal{M}(\sigma_{AB})\geq \mathcal{M}(\sigma_A)+\mathcal{M}(\sigma_B)\implies\mathcal{M}(\sigma_B)=0$ which implies $\mathcal{M}(\sigma_B)=0$, a contradiction.  
\end{proof}
From this proposition, if $NG(\rho)$ were super-additive, broadcasting non-Gaussianity would be impossible. However, contrary to common belief, $NG(\rho)$ is not super-additive. Super-additivity of $NG(\rho)$ is equivalent to extremality of Gaussian states for mutual information, as stated in the following lemma.

\begin{lemma}\label{lem1}
    For any non-Gaussian state $\rho_{AB}$ define mutual information $I(\rho_{AB})=S(\rho_A)+S(\rho_B)-S(\rho_{AB})$, and let $\Gamma_{AB}$ denote its closest Gaussian state. Then
    \begin{equation*}
   \{ NG(\rho_{AB})\geq NG(\rho_A)+NG(\rho_B)\}\iff \{I(\rho_{AB})\geq I(\Gamma_{AB})\}
    \end{equation*}
\end{lemma}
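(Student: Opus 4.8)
The plan is to collapse the claimed equivalence onto a single identity by unpacking the definition $NG(\rho)=S(\Gamma_\rho)-S(\rho)$ from (\ref{x1}) together with the definition of $I$. The only structural input I would need is that the reduced states of the closest Gaussian state $\Gamma_{AB}$ are themselves the closest Gaussian states of the reduced states of $\rho_{AB}$, i.e. $\Gamma_A=\Gamma_{\rho_A}$ and $\Gamma_B=\Gamma_{\rho_B}$.

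First I would establish this. By the characterization recalled just before (\ref{x1}), $\Gamma_{AB}$ is the Gaussian state with the same mean vector and covariance matrix $\Lambda$ as $\rho_{AB}$. Splitting $\Lambda$ into blocks along the $A|B$ partition, $\Lambda=\begin{pmatrix}\Lambda_A & C\\ C^T & \Lambda_B\end{pmatrix}$, the partial trace $\Tr_B$ acts on the characteristic function by (\ref{chip3}), namely by setting $\xi_B=0$; applied to the Gaussian $\chi(\xi_A,\xi_B;\Gamma_{AB})$ this leaves a Gaussian characteristic function in $\xi_A$ with covariance matrix $\Lambda_A$, so $\Gamma_A$ is Gaussian with covariance $\Lambda_A$ and the corresponding mean. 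But $\Lambda_A$ is precisely the covariance matrix of $\rho_A$, since the first and second moments of $\rho_A$ are those of $\rho_{AB}$ restricted to the $A$ quadratures. Hence $\Gamma_A$ is the Gaussian state with the same moments as $\rho_A$, i.e. $\Gamma_A=\Gamma_{\rho_A}$; symmetrically $\Gamma_B=\Gamma_{\rho_B}$.

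With this in hand the computation is immediate. Applying (\ref{x1}) to $\rho_{AB}$, $\rho_A$, $\rho_B$ and regrouping,
\begin{align*}
&NG(\rho_{AB})-NG(\rho_A)-NG(\rho_B)\\
&=\bigl[S(\Gamma_{AB})-S(\Gamma_A)-S(\Gamma_B)\bigr]-\bigl[S(\rho_{AB})-S(\rho_A)-S(\rho_B)\bigr].
\end{align*}
Since $\Gamma_A=\Gamma_{\rho_A}$ and $\Gamma_B=\Gamma_{\rho_B}$ are exactly the marginals of $\Gamma_{AB}$, the first bracket equals $-I(\Gamma_{AB})$ and the second equals $-I(\rho_{AB})$, so the left-hand side is $I(\rho_{AB})-I(\Gamma_{AB})$. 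Therefore the super-additivity inequality $NG(\rho_{AB})\ge NG(\rho_A)+NG(\rho_B)$ holds if and only if $I(\rho_{AB})-I(\Gamma_{AB})\ge 0$, which is the asserted equivalence.

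I do not anticipate a serious obstacle; the only point demanding care is the identification $\Gamma_A=\Gamma_{\rho_A}$, which rests on two facts already embedded in the formalism of Sec.~II: the closest Gaussian state is the one matching the first and second moments, and partial traces of Gaussian states are Gaussian with the sub-block covariance matrix. One can alternatively run the argument at the level of relative entropies via $NG(\rho)=D(\rho|\Gamma_\rho)$ and a chain-rule-type decomposition of $D(\rho_{AB}|\Gamma_{AB})$, but the entropic bookkeeping above is the most direct route.
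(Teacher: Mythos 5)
Your proposal is correct and follows essentially the same route as the paper: it uses the identity $NG(\rho)=S(\Gamma_\rho)-S(\rho)$ from (\ref{x1}) together with the observation that the marginals of the closest Gaussian state $\Gamma_{AB}$ are the closest Gaussian states of the marginals of $\rho_{AB}$ (since they share covariance matrices block by block), and then the equivalence reduces to the single identity $NG(\rho_{AB})-NG(\rho_A)-NG(\rho_B)=I(\rho_{AB})-I(\Gamma_{AB})$. Your write-up simply fills in the details that the paper's proof sketch leaves implicit.
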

{\it Proof sketch.} The result follows from  \eqref{x1} and the fact that if $\Gamma_{AB}$ is the closest Gaussian state to $\rho_{AB}$, then $\Gamma_A$ and $\Gamma_B$ are the closest Gaussian states to $\rho_A$ and $\rho_B$. This holds because $\Gamma_{AB}$ and $\rho_{AB}$ share the same covariance matrix, and so do their reduced states, which are obtained by projecting onto the set of modes of the reduced state required.\\
\\
It has been shown that, however, certain states violate the extremality condition for mutual information \cite{Park_2017}, and hence also violate the super-additivity of $NG(.)$. As a concrete example, we consider two orthogonal cat states $\ket{0}=(\ket{\alpha}+\ket{-\alpha})\sqrt{2}$ and $\ket{1}=(\ket{\alpha}-\ket{-\alpha})/\sqrt{2}$ where $\ket{\alpha}$ is a coherent state ($\hat{a}\ket{\alpha}=\alpha\ket{\alpha}$). These states are normalized and orthogonal for sufficiently large $\alpha$ ($\alpha\gtrapprox2)$. For such ranges of $\alpha$ the orthogonality translates to $\bra{-\alpha}\alpha\rangle\approx 0$. Now for the state $\ket{\phi+}_{AB}=(\ket{00}+\ket{11})/\sqrt{2}$ we explicitly derive in the appendix \ref{appen4} that $\Delta NG_{\phi_+}:= NG(\ket{\phi+}_{AB})-NG(\phi+_{A})-NG(\phi+_{B})$ is given by:
\be\label{eqr1}
\Delta NG_{\phi_+}=2\ln{2}-2h(\sqrt{1+4\alpha^2})+h(\sqrt{1+8\alpha^2})
\ee
where $h(x):=\frac{x+1}{2}\ln{\frac{x+1}{2}}-\frac{x-1}{2}\ln{\frac{x-1}{2}}$ and $\phi+_{A(B)}$ are the reduced states on respective subsystem. As shown in Figure \ref{fig:epsart}, for certain values of $\alpha$, super-additivity of $NG(.)$ is violated.
\begin{figure}[h]
\begin{centering}
\includegraphics[scale=0.5]{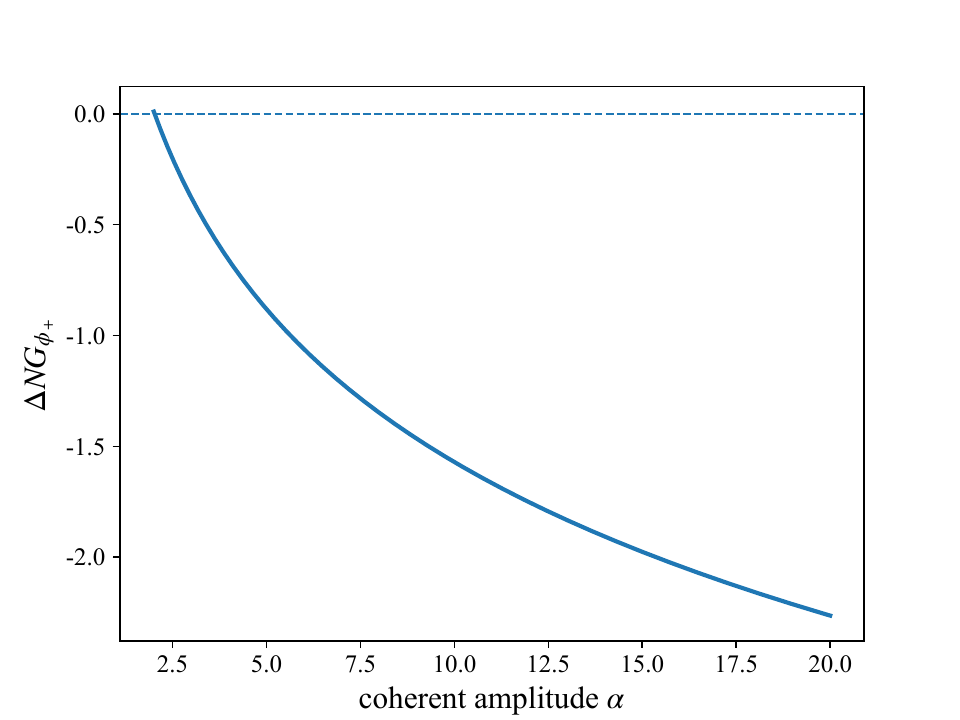}
\caption{\label{fig:epsart} Variation of $\Delta NG_{\phi_+}$ as given by (\ref{eqr1}). The negativity of the values shows that for those $\alpha$ the state violates super-additivity. Refer to the text for more details.}
\end{centering}
\end{figure}
Further, to our knowledge, no known measure of non-Gaussianity satisfies all the properties required to categorically rule out broadcasting. This motivates us to reformulate the problem in a different way.

\subsection{\label{sec:level3} Fixed points of Gaussian channels}
If a broadcasting scenario exists with some Gaussian operation $\mathcal{E}$ and some non-Gaussian state $\rho_A$, then there must also exist a Gaussian operation having $\rho_A$ as its fixed point, namely the operation $\Tr_B\circ \mathcal{E}$. Here, a state is a fixed point of a quantum channel if it remains invariant under the action of the channel. Our first theorem states:

\begin{theorem}\label{t1}
    On the level of the covariance matrix $\Lambda$, the action of any Gaussian channel is given by $\Lambda\to X^T\Lambda X +Y$ (see \eqref{x2}). A Gaussian channel with a Schur stable $X$ cannot admit any non-Gaussian fixed point.
\end{theorem}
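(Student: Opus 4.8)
The plan is to show that Schur stability of $X$ forces the channel to have a unique fixed point at the level of the full quantum state, and that this unique fixed point is automatically Gaussian. First I would recall that $X$ Schur stable means its spectral radius satisfies $\rho(X)<1$, i.e. all eigenvalues of $X$ lie strictly inside the unit disk. The first step is the covariance-matrix level: if $\Lambda$ is a fixed point of the map $\Lambda\mapsto X^T\Lambda X+Y$, then by iterating we get $\Lambda=\sum_{k=0}^{n-1}(X^T)^k Y X^k+(X^T)^n\Lambda X^n$, and since $\rho(X)<1$ the last term vanishes as $n\to\infty$, so $\Lambda=\sum_{k=0}^{\infty}(X^T)^k Y X^k$ is the \emph{unique} solution of the discrete Lyapunov equation (this is the point where the connection to control theory enters — existence and uniqueness of the Lyapunov solution under Schur stability). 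So all fixed-point states, Gaussian or not, must share this one covariance matrix $\Lambda_\ast$.

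The second and main step is to promote this to the characteristic function. Using \eqref{chip1x}, a state $\rho$ is a fixed point iff $\chi(\xi;\rho)=\exp(-\tfrac14\xi^T Y\xi)\,\chi(X\xi;\rho)$ for all $\xi$. Iterating $n$ times gives
\begin{equation*}
\chi(\xi;\rho)=\exp\!\Big(-\tfrac14\xi^T\Big(\textstyle\sum_{k=0}^{n-1}(X^T)^kYX^k\Big)\xi\Big)\,\chi(X^n\xi;\rho).
\end{equation*}
Now I would take $n\to\infty$: since $\rho(X)<1$ we have $X^n\xi\to 0$, and by continuity of the characteristic function together with the normalization $\chi(0;\rho)=1$, the residual factor $\chi(X^n\xi;\rho)\to 1$. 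Meanwhile the exponent converges to $-\tfrac14\xi^T\Lambda_\ast\xi$ with $\Lambda_\ast$ the Lyapunov solution above. Hence $\chi(\xi;\rho)=\exp(-\tfrac14\xi^T\Lambda_\ast\xi)$ — a Gaussian characteristic function (zero mean, or more generally one fixes the mean similarly using the displacement part). Therefore every fixed point of such a channel is Gaussian, and in particular no non-Gaussian fixed point can exist.

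The main obstacle I anticipate is the analytic justification of the limit $\chi(X^n\xi;\rho)\to 1$: one needs that the characteristic function is continuous at the origin (true for any trace-class $\rho$, since $\hat D(\xi)$ is strongly continuous) and that the convergence $X^n\xi\to 0$ is genuine, which requires the \emph{strict} inequality $\rho(X)<1$ rather than $\rho(X)\le 1$ — a Jordan-block argument handles the boundary case being excluded. A secondary subtlety is bookkeeping the displacement vector $d$: if $d\neq 0$ one carries an extra factor $\exp(i\,d_n^T\xi)$ with $d_n=\sum_{k=0}^{n-1}X^k d$ (or the appropriate convention-dependent expression), which converges because $(I-X)$ is invertible under Schur stability, merely shifting the Gaussian's mean; the paper's stated convention of suppressing $d$ means this can be relegated to a remark. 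Everything else — the geometric-series manipulations and the Lyapunov identity — is routine linear algebra.
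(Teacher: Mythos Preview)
Your proposal is correct and matches the paper's proof essentially line for line: iterate the channel's action on the characteristic function, use Schur stability to send $X^n\xi\to 0$, invoke continuity and the normalization $\chi(0;\rho)=1$, and conclude that the fixed-point characteristic function is the Gaussian $\exp(-\tfrac14\xi^T\Lambda_\ast\xi)$ with $\Lambda_\ast=\sum_{k\ge 0}(X^T)^kYX^k$. Your additional remarks on continuity of $\chi$ at the origin and on the displacement $d$ are more explicit than what the paper writes, but they are exactly the justifications the paper's argument relies on.
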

The proof is deferred to the Appendix \ref{appen2}. It is important to stress that the Schur stability of $X$ here is the key requirement, which means that all of the eigenvalues $\lambda_i$ of $X$ satisfy $|\lambda_i|<1$. This forces the dynamics in phase space to be strictly contractive, which shifts or perturbs every non-Gaussian state. However, this does not cover all the Gaussian channels. We can define the spectral radius of a matrix M with eigenvalues $\{\lambda_i\}$ as
\be
r(M):=\max_i |\lambda_i|
\ee
The Gaussian channels can now be classified as $r(X)>1,r(X)=1$, or $r(X)<1$. We have already ruled out the existence of non-Gaussian fixed points for $r(X)<1$. Our next theorem rules out $r(X)>1$:

\begin{lemma}\label{tx1}
    A Gaussian channel with $r(X)>1$ cannot admit any fixed point, be it Gaussian or non-Gaussian.
\end{lemma}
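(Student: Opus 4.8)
The plan is to show that if a state $\rho$ is a fixed point of a Gaussian channel $\mathcal{E}$ with action $\Lambda \to X^T\Lambda X + Y$ on the covariance matrix, then $\rho$ being fixed forces its covariance matrix $\Lambda$ to satisfy the discrete Lyapunov-type equation $\Lambda = X^T\Lambda X + Y$, and I would argue this equation has no positive-definite (indeed no valid) solution when $r(X)>1$. The key observation is that a genuine quantum state has a bona fide covariance matrix $\Lambda$ satisfying the uncertainty relation $\Lambda + i\Omega \geq 0$, in particular $\Lambda > 0$ as a real symmetric matrix; so it suffices to show $\Lambda = X^T\Lambda X + Y$ with $Y \geq 0$ admits no positive-definite symmetric solution once $X$ has an eigenvalue of modulus strictly greater than one.

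First I would iterate the fixed-point relation: since $\rho = \mathcal{E}(\rho)$ implies $\rho = \mathcal{E}^n(\rho)$ for all $n$, on covariance matrices this gives $\Lambda = (X^T)^n \Lambda X^n + \sum_{k=0}^{n-1} (X^T)^k Y X^k$. Because every term $(X^T)^k Y X^k$ is positive semidefinite, the partial sums are monotonically increasing in the positive-semidefinite order and bounded above by $\Lambda$, hence the series $\sum_{k\geq 0} (X^T)^k Y X^k$ converges. Next I would extract the contradiction from divergence of $X^n$: let $v$ be a (possibly complex) eigenvector of $X$ with eigenvalue $\lambda$, $|\lambda|>1$. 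Pairing the relation $\Lambda \geq (X^T)^n \Lambda X^n$ with $\bar v$ on the left and $v$ on the right gives $\bar v^T \Lambda v \geq |\lambda|^{2n}\, \bar v^T \Lambda v$; since $\Lambda>0$ the quantity $\bar v^T \Lambda v$ is a strictly positive real number, so dividing through yields $1 \geq |\lambda|^{2n}$ for every $n$, contradicting $|\lambda|>1$. (If $X$ is not diagonalizable one works with a Jordan block or simply notes that $r(X)>1$ already guarantees an eigenvalue of modulus $>1$, which is all that is used.) One should also handle the displacement: a fixed point must additionally satisfy $\bar x = X^T \bar x + \text{(const)}$ in the notation of \eqref{t3}, but this is subsumed — the covariance-matrix obstruction alone already rules out any fixed point, Gaussian or not.

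The main subtlety, and the step I would be most careful about, is the legitimacy of pairing inequalities between operators with fixed vectors and the positivity $\bar v^T \Lambda v > 0$ for complex $v$: this is just the statement that a real symmetric positive-definite matrix defines a positive-definite Hermitian form on $\mathbb{C}^{2N}$, so $\bar v^T \Lambda v > 0$ whenever $v \neq 0$, and the monotonicity $\Lambda \geq (X^T)^n \Lambda X^n$ (from discarding the nonnegative tail $\sum_{k<n}(X^T)^kYX^k \geq 0$) transfers directly to the Hermitian form. A secondary point worth a sentence is why $\Lambda > 0$ strictly rather than merely $\Lambda \geq 0$: this follows because the purity-extremal physical covariance matrices still satisfy $\Lambda \geq |\Omega| > 0$ by the uncertainty principle $\Lambda + i\Omega \geq 0$, so no physical covariance matrix is singular. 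With $\Lambda > 0$ in hand the eigenvalue argument closes immediately, giving the contradiction and proving that $r(X)>1$ precludes any fixed point.
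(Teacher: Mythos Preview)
Your proof is correct and follows the same route as the paper: both reduce the question to the discrete Lyapunov equation $\Lambda = X^T\Lambda X + Y$ on the covariance matrix and argue that no positive-definite solution exists when $r(X)>1$. The difference is only in level of detail: the paper invokes the standard control-theory fact (citing \cite{boyd}) that a positive-definite solution with $Y\geq 0$ forces $r(X)\leq 1$, whereas you actually supply that argument explicitly via the iterated inequality $\Lambda \geq (X^T)^n\Lambda X^n$ and the eigenvector pairing, making your version self-contained. One small remark: your aside about Jordan blocks is unnecessary, since $r(X)>1$ already gives you an eigenvalue of modulus $>1$ and hence a nonzero eigenvector, which is all your argument uses; and your justification that $\Lambda>0$ via the uncertainty relation is correct in spirit (though the specific inequality $\Lambda\geq|\Omega|$ depends on conventions --- the cleaner statement is that $\Lambda+i\Omega\geq0$ together with invertibility of $\Omega$ forces $\Lambda$ to be nonsingular, hence strictly positive).
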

{\it Proof sketch.} The essential idea is that if there is a fixed point, the equation
\be
f(M)=X^TMX-M+Y=0,
\ee
must have a solution $M=\Lambda>0$. This equation is the well-known Lyapunov equation \cite{boyd}, important in the analysis of stable system dynamics. Now it is known from standard analysis of such systems \cite{boyd} that having a solution $\Lambda>0$ for a system that has $Y\geq 0$ forces $r(X)\leq 1$ and hence, by negation it follows that $r(X)>1$ admits no fixed point such that $\Lambda>0$ and $Y\geq 0$.   \\
\\
Both the above results suggest that broadcasting cannot take place with Gaussian channels that have $r(X)<1$ or $r(X)>1$ because they fail to admit any non-Gaussian fixed point, which is the first requirement for broadcasting of the state. If we define $\mathcal{S}:=\{X\in \mathbb R^{2n\times 2n}|r(X)<1\}$, then this is a semi-algebraic set as it can be characterized by (via the Schur–Cohn test \cite{Dym_Young_1990}) a finite list of polynomial inequalities in the coefficients of the characteristic polynomial of $X$. Now the boundary of the closure of such sets $\partial \overline{\mathcal S}:=\{X\in \mathbb R^{2n\times 2n}|r(X)=1\}$ is essentially strictly lower dimensional \cite{sinn2014algebraicboundariesconvexsemialgebraic} than $\mathcal{S}$ which makes $\partial \overline{\mathcal S}$ a Lebesgue measure $0$ set in $\mathbb R^{4n^2}$.
This means that channels admitting non-Gaussian fixed points form an exceptional, fine-tuned subset; a randomly chosen Gaussian channel almost surely lies within the contractive region $r(X)<1$ and therefore admits only Gaussian fixed points.
Hence, \textit{if we randomly pick a Gaussian channel, then it will never have a non-Gaussian fixed point and hence, would never broadcast}. Nonetheless, it is very easy to construct channels that lie within this boundary, with a simple example being the two-mode channel $I\otimes \mathcal L_\eta(.)$ where $\mathcal L(.)$ is the loss channel which has $X=\sqrt{\eta}I$ and $Y=(1-\eta)I$. This channel has infinitely many non-Gaussian fixed points given by $\ket{\psi}\otimes\ket{0}$ with $\ket{\psi}$ being any non-Gaussian state. 
\subsection{\label{sec:level3} Broadcasting and degradability of non-Gaussianity}
As far as fixed points are concerned, we have shown that the set of channels that have $X\in\partial \overline{\mathcal S}$ cannot be ruled out. Yet broadcasting is a much more restrictive notion than just having non-Gaussian fixed points. We leverage this now using a key concept called complementary channel. For a channel $\mathcal{E}(\rho_A)=\Tr_E(\hat{U}\rho_A\otimes\omega_E \hat{U}^\dagger)$ and any state $\rho_A$, we define the complementary channel as $\mathcal{E}^c_\rho(\sigma)=\Tr_A(\hat{U}\rho_A\otimes\sigma_E\hat{U}^\dagger)$. This leads to the following theorem.

\begin{theorem}\label{d1}
    If a Gaussian channel $\mathcal{E}$ (with $r(X)=1$) has at least one non-Gaussian fixed point $\rho$, then for that $\rho$, the complementary channel $\mathcal{E}^c_\rho$ is Gaussian.
\end{theorem}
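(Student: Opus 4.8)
\emph{Proof strategy.} The plan is to show that a Gaussian channel with $r(X)=1$ admitting a non-Gaussian fixed point must, after a Gaussian unitary change of coordinates on the system, split off a noiseless Gaussian-unitary ``mode block'', and then to read off the complementary channel from this factorization. First, Theorem~\ref{t1} and Lemma~\ref{tx1} together with the hypothesis force $r(X)=1$; since the covariance matrix $\Lambda$ of the fixed point $\rho$ is itself a fixed point of the induced covariance-matrix map (see \eqref{x2}), it satisfies the discrete Lyapunov (Stein) equation $X^{T}\Lambda X-\Lambda+Y=0$ with $\Lambda>0$ and $Y\ge 0$. I would then run the standard spectral analysis of this equation: for any eigenvector $Xv=\lambda v$ with $|\lambda|=1$ one gets $v^{\dagger}Yv=(1-|\lambda|^{2})\,v^{\dagger}\Lambda v=0$, so all unit-modulus eigenvectors lie in $\ker Y$, and a short Jordan-chain argument (exploiting $\Lambda>0$, $Y\ge 0$, and $Yv=0$) shows these eigenvalues are semisimple. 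Letting $P_m,P_s$ be the spectral projectors of $X$ onto $|\lambda|=1$ and $|\lambda|<1$ and $V_m=\mathrm{im}\,P_m$, this yields $Y|_{V_m}=0$, with $X|_{V_m}$ marginally stable and $X|_{V_s}$ Schur stable.

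The next step is to upgrade ``$Y$ vanishes on $V_m$'' to ``$\mathcal{E}$ is unitary on $V_m$''. The complete-positivity condition for a Gaussian channel reads $Y+i\bigl(\Omega-X^{T}\Omega X\bigr)\succeq 0$ as Hermitian matrices; compressing this to the $X$-invariant subspace $V_m$, where $Y$ vanishes, leaves the compression of $i\bigl(\Omega-X^{T}\Omega X\bigr)$ positive semidefinite, and since $P\Omega P$ and $PX^{T}\Omega X P$ are antisymmetric (with $P$ the orthogonal projector onto $V_m$) this compression has zero trace; a positive semidefinite matrix of vanishing trace is zero, so $\Omega(Xv,Xw)=\Omega(v,w)$ for all $v,w\in V_m$, i.e.\ $X$ preserves $\Omega|_{V_m}$. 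After showing $\Omega|_{V_m}$ is nondegenerate — so that $V_m$ is a genuine symplectic (mode) subspace — a Gaussian unitary on the system realizes the splitting $A=A_mA_s$ and $\mathcal{E}=\mathcal{U}_m\otimes\mathcal{E}_s$, where $\mathcal{U}_m$ is a Gaussian unitary channel on $A_m$ with trivial environment and $\mathcal{E}_s$ is a Gaussian channel on $A_s$ that carries the whole Stinespring environment and has $r(X_s)<1$. \textbf{This factorization is the crux and the main obstacle:} the CP/trace step is immediate, but proving that $\Omega|_{V_m}$ is nondegenerate (ruling out an isotropic $X$-invariant subspace, on which the fixed-point covariance matrix would still be forced positive) and that the $X$-spectral splitting can be aligned with a symplectic mode splitting requires careful use of the symplectic identities for the dilating unitary.

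Granting the factorization, the remaining steps are routine. By Theorem~\ref{t1}, $\mathcal{E}_s$ has no non-Gaussian fixed point, and since $r(X_s)<1$ it has a unique fixed point, namely the Gaussian state $\rho_s^{\star}$ whose covariance matrix is the unique Stein solution for $(X_s,Y_s)$; indeed iterating $\chi(\xi;\mathcal{E}_s(\sigma))=\exp(-\tfrac14\xi^{T}Y_s\xi)\,\chi(X_s\xi;\sigma)$ gives $\mathcal{E}_s^{\,n}(\sigma)\to\rho_s^{\star}$ pointwise in the characteristic function. Applying $\mathcal{E}^{\,n}=\mathcal{U}_m^{\,n}\otimes\mathcal{E}_s^{\,n}$ to the fixed point $\rho$ and letting $n\to\infty$ — using uniform continuity of $\chi_{\rho}$ to suppress the vanishing $A_s$-argument, while the $A_m$-argument only gets rotated and $\chi_{\rho_m}$ is $\mathcal{U}_m$-invariant — shows $\rho=\rho_m\otimes\rho_s^{\star}$ with $\rho_m$ a $\mathcal{U}_m$-invariant state on $A_m$; non-Gaussianity of $\rho$ and Gaussianity of $\rho_s^{\star}$ make $\rho_m$ non-Gaussian.

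Finally, the dilating unitary factorizes as $\hat U=\hat U_m\otimes\hat U_s$ with $\hat U_m$ acting on $A_m$ alone, so
\[
\mathcal{E}^{c}_{\rho}(\sigma)=\Tr_{A}\!\bigl(\hat U(\rho\otimes\sigma)\hat U^{\dagger}\bigr)=\Tr\!\bigl(\hat U_m\rho_m\hat U_m^{\dagger}\bigr)\,\Tr_{A_s}\!\bigl(\hat U_s(\rho_s^{\star}\otimes\sigma)\hat U_s^{\dagger}\bigr)=(\mathcal{E}_s)^{c}_{\rho_s^{\star}}(\sigma).
\]
Since $\rho_s^{\star}$ is Gaussian and $\hat U_s$ is a Gaussian unitary, $\hat U_s(\rho_s^{\star}\otimes\sigma)\hat U_s^{\dagger}$ is Gaussian whenever $\sigma$ is, and the partial trace keeps it Gaussian; hence $\mathcal{E}^{c}_{\rho}$ sends Gaussian states to Gaussian states, which is the claim. (One can add that any two Gaussian Stinespring dilations of $\mathcal{E}$ have complementary channels related by Gaussian operations, so the conclusion is independent of the chosen dilation.)
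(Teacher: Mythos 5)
Your route is genuinely different from the paper's, but as written it has a real gap at exactly the point you flag as ``the crux'': the factorization $\mathcal{E}=\mathcal{U}_m\otimes\mathcal{E}_s$. Your trace-compression of the CP matrix $M=Y+i\bigl(\Omega-X^{T}\Omega X\bigr)\succeq 0$ only yields $\Omega(Xv,Xw)=\Omega(v,w)$ for $v,w\in V_m$; it gives you neither (a) nondegeneracy of $\Omega|_{V_m}$ nor (b) $\Omega(V_m,V_s)=0$, and \emph{both} are needed before $V_m\oplus V_s$ is a symplectic mode splitting on which the channel tensor-factorizes. The fix is to use the full CP condition rather than its compression: for real $v\in V_m$ one has $v^{T}Yv=0$ and $v^{T}\Omega v=v^{T}X^{T}\Omega Xv=0$ by antisymmetry, so $v^{T}Mv=0$ and hence $Mv=0$, i.e.\ $\Omega v=X^{T}\Omega Xv$ as vectors. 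Then for any $w\in V_s$, $\Omega(w,v)=\Omega(X^{n}w,X^{n}v)\to 0$ since $X|_{V_s}$ is Schur stable and $X|_{V_m}$ is power-bounded (semisimple unit-circle spectrum), which gives $V_s\subseteq V_m^{\perp\Omega}$; nondegeneracy of $\Omega|_{V_m}$ then follows because a vector of $V_m$ that is $\Omega$-orthogonal to $V_m$ is $\Omega$-orthogonal to all of $\mathbb{R}^{2n}$. Without this (or an equivalent) argument the proposal does not close. Two smaller unproved assertions: the claim that complementary channels of different Gaussian dilations are related by Gaussian operations, and, more importantly, the fact that your entire route starts from the Stein equation $X^{T}\Lambda X-\Lambda+Y=0$, which presupposes the non-Gaussian fixed point has a finite covariance matrix.

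For comparison, the paper avoids the structural decomposition entirely: it writes $\chi(\xi;\mathcal{E}^{c}_{\rho}(\omega))=\chi(s_2\xi;\rho)\,\chi(s_4\xi;\omega)$, notes $\mathrm{im}(s_2)=\mathrm{im}(Y)$, and uses a Gramian-convergence lemma (finiteness of $G(v)=\sum_j\|\sqrt{Y}X^{j}v\|^{2}$ on $\mathrm{im}(Y)$, which is extracted from the iterated fixed-point identity for $\chi$ rather than from second moments) to conclude $\mathrm{im}(Y)\subseteq E_{<1}$ and hence $\chi(s_2\xi;\rho)=\exp\bigl(-\tfrac14 G(s_2\xi)\bigr)$, manifestly Gaussian. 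That argument is shorter and needs no moment assumptions. Your approach, once the symplectic-splitting gap is filled, buys a strictly stronger structural conclusion --- every such fixed point has the form $\rho_m\otimes\rho_s^{\star}$ with the non-Gaussianity confined to a decoupled unitarily-evolving mode block --- which is illuminating but more than the theorem requires.
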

\begin{proof}
    The complementary channel depends on the Gaussian unitary $\hat{U}_S$ defining the Gaussian channel with $S=\begin{pmatrix}
        s_1&s_2\\
        s_3&s_4
    \end{pmatrix}$. If the Gaussian channel has a non-Gaussian fixed point then $\mathcal{E}^m(\rho)=\rho$ which gives the constraint for every $m\geq 1$:
    \be\label{drdo}
    \chi(\xi;\rho)=\chi(X^m(\xi);\rho)\exp({-\frac{1}{4} \xi^T\tilde{Y} \xi})
    \ee
    where $\tilde{Y}=\sum_{j=0}^{m-1}X^{jT}YX^j$. This constraint follows from the fact that the characteristic function must satisfy $\chi(\xi;\rho)=\chi(\xi;\mathcal{E}^m(\rho))$. In the appendix \ref{appen2}, we discuss this constraint and why this means that for any $\xi$, $\xi^T\tilde{Y\xi}$ needs to be finite. Next, we want to prove that $\chi \left({  \xi}_E;\mathcal{E}^c_\rho(\omega)\right)$ is Gaussian whenever $\omega$ is Gaussian. Observe that,
\ba
\chi \left({  \xi};\mathcal{E}^c_\rho(\omega)\right)&=&\chi \left(0,{  \xi};\hat{U}_S\rho_A\otimes\omega_E\hat{U}_S^\dagger\right)\\
&=&\chi \left(s_2\xi,s_4\xi;\rho_A\otimes\omega_E\right)\\
&=&\chi\left(s_2\xi;\rho \right)\chi \left(s_4\xi;\omega_E\right)
\ea
In the first step, we have used property (3) of the Wigner characteristic function, in the second step, we used \eqref{t3} to remove the unitary action, and in the third step, we used property (2) of the Wigner characteristic function. Since $\omega$ is Gaussian, $\chi \left(s_4{  \xi};\omega_E\right)$ is a Gaussian function in $\xi$. To complete the proof we also need $\chi \left(s_2{  \xi};\rho_A\right)$ be gaussian. Let us analyse the space where $s_2\xi$ leaves, which is the image of the operator $s_2$, denoted by $im(s_2)$. Now $im(s_2)=im(Y)$ because $Y=s_2\Lambda_\gamma s_2^T$ with $\Lambda_\gamma>0$ being the covariance matrix of the environment that realizes the channel $\mathcal E$. Now we use a useful result related to the stability of discrete-time systems and control theory, whose proof we defer to the appendix \ref{appen3}
\\
\begin{lemma}\label{betax}
    Given $r(X)=1$ and $E_{<1}$ be the generalized eigenspace associated with eigenvalues of $X$ that lie strictly within the unit disk, and $E_{=1}$ be the generalized eigenspace with eigenvalues on the boundary of the unit disk. Further, define $a_j(v):=||\sqrt{Y}X^j(v)||^2$ and $G(v)=\sum_{j=0}^{\infty}a_j(v)$ then the following are equivalent:
    \begin{itemize}
        \item $G(v)<\infty$ for all $v\in im(Y)$
        \item $im(Y)\subseteq E_{<1}$
    \end{itemize}
    This in turn means $X^j(v)\to 0$ for all $v\in im(Y)$ as $j\to \infty$.
\end{lemma}
The lemma above, when applied to $v=s_2\xi$, means $X^j(v)\to 0$ as $j\to\infty$ while $G(v)$ remains finite. This, along with the characteristic function constraint (\ref{drdo}), gives:
\be
\chi(s_2\xi;\rho)=\exp({-\frac{1}{4} \xi^Ts_2^T\tilde{Y} s_2\xi})=\exp(-\frac{1}{4}G(s_2\xi))
\ee
hence, $\chi(s_2\xi;\rho)$ is also Gaussian which completes the proof.
\end{proof}
We now state the main theorem of the paper.

\begin{theorem}\label{thrm2}
    Non-Gaussianity cannot be broadcast
\end{theorem}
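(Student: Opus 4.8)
The plan is to combine the structural results already established about fixed points and complementary channels. Recall that a broadcasting protocol, if it exists, consists of a Gaussian operation $\mathcal{E}(\rho_A)=\sigma_{AB}$ with $\Tr_B(\sigma_{AB})=\rho_A$ for some non-Gaussian $\rho_A$, and $\Tr_A(\sigma_{AB})$ non-Gaussian. First I would observe that the channel $\mathcal{N}:=\Tr_B\circ\mathcal{E}$ is then a Gaussian channel having $\rho_A$ as a fixed point, so by \thrm{t1} and \lemm{tx1} the associated matrix $X$ cannot satisfy $r(X)<1$ or $r(X)>1$; hence we are forced into the boundary case $r(X)=1$. This is exactly the regime to which \thrm{d1} applies.

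Next I would identify the map $\Tr_A\circ\mathcal{E}$ with the complementary channel $\mathcal{E}^c_{\rho_A}$ in the sense of \thrm{d1}: writing $\mathcal{E}(\rho)=\Tr_C(\hat{U}_S(\rho\otimes\gamma_{BC})\hat{U}_S^\dagger)$ with $E\equiv BC$, the reduced output on $B$ is obtained by tracing out $A$ and $C$, and one checks (using properties (2) and (3) of the characteristic function, exactly as in the proof of \thrm{d1}) that this coincides with feeding the fixed environment state into $\mathcal{E}^c_{\rho_A}$ and tracing appropriately — i.e. $\Tr_A(\sigma_{AB}) = \Tr_{A,C}(\hat{U}_S(\rho_A\otimes\gamma_{BC})\hat{U}_S^\dagger)$ is the output of a Gaussian channel applied to the Gaussian state $\gamma_{BC}$, precisely because \thrm{d1} guarantees $\mathcal{E}^c_{\rho_A}$ is Gaussian for the fixed point $\rho_A$. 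Since Gaussian channels map Gaussian states to Gaussian states ($\mathcal{O}(\mathcal{F}_G)=\mathcal{F}_G$), the state $\Tr_A(\sigma_{AB})$ must be Gaussian, contradicting the requirement that it be non-Gaussian. Therefore no broadcasting protocol exists.

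The main obstacle I anticipate is making the identification in the second step fully rigorous: the complementary channel as defined in the paper, $\mathcal{E}^c_\rho(\sigma)=\Tr_A(\hat{U}\rho_A\otimes\sigma_E\hat{U}^\dagger)$, takes an arbitrary environment input $\sigma_E$, whereas in the broadcasting setup the environment is the \emph{fixed} state $\gamma_{BC}$ and one must then further trace out the $C$ part to isolate $B$. So one needs to argue that $\Tr_A(\sigma_{AB})$ is obtained from $\gamma_{BC}$ by first applying the Gaussian channel $\mathcal{E}^c_{\rho_A}$ (which is Gaussian by \thrm{d1}) and then a partial trace over $C$ — and partial tracing is itself a Gaussian operation in $\mathcal{O}$, so Gaussianity is preserved. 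A secondary subtlety is that \thrm{d1} is stated for the complementary channel with respect to the full environment $E$; one should confirm that its conclusion — Gaussianity of $\chi(s_2\xi;\rho)$ via \lemm{betax} — does not depend on any further splitting of $E$ into $B$ and $C$, which it does not, since the argument only uses $\mathrm{im}(s_2)=\mathrm{im}(Y)\subseteq E_{<1}$. Once these bookkeeping points are settled, the contradiction is immediate.

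\begin{proof}
Suppose, for contradiction, that non-Gaussianity can be broadcast: there exist a Gaussian operation $\mathcal{E}$ and a non-Gaussian state $\rho_A$ with $\mathcal{E}(\rho_A)=\sigma_{AB}$, $\Tr_B(\sigma_{AB})=\rho_A$, and $\Tr_A(\sigma_{AB})$ non-Gaussian. Write $\mathcal{E}(\rho)=\Tr_C(\hat{U}_S(\rho\otimes\gamma_{BC})\hat{U}_S^\dagger)$ with the environment $E\equiv BC$ in the Gaussian state $\gamma_{BC}$, and $S=\begin{pmatrix} s_1 & s_2\\ s_3 & s_4\end{pmatrix}$.

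The channel $\mathcal{N}:=\Tr_B\circ\mathcal{E}$ is Gaussian and has $\rho_A$ as a fixed point. By \thrm{t1}, $\mathcal{N}$ cannot have a non-Gaussian fixed point if its covariance-level matrix $X$ satisfies $r(X)<1$; by \lemm{tx1}, it cannot have any fixed point if $r(X)>1$. Hence $r(X)=1$, and \thrm{d1} applies: the complementary channel $\mathcal{E}^c_{\rho_A}$ is Gaussian.

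Now, using properties (2) and (3) of the characteristic function exactly as in the proof of \thrm{d1},
\begin{align*}
\chi(\xi_B;\Tr_A(\sigma_{AB})) &= \chi(0,\xi_B,0;\hat{U}_S(\rho_A\otimes\gamma_{BC})\hat{U}_S^\dagger)\\
&= \chi\bigl(s_2(\xi_B,0)^T;\rho_A\bigr)\,\chi\bigl(s_4(\xi_B,0)^T;\gamma_{BC}\bigr).
\end{align*}
By \thrm{d1} (via \lemm{betax}), $\chi(s_2\eta;\rho_A)$ is a Gaussian function of $\eta$, and $\chi(s_4\eta;\gamma_{BC})$ is Gaussian since $\gamma_{BC}$ is Gaussian. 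Therefore $\chi(\xi_B;\Tr_A(\sigma_{AB}))$ is a product of Gaussian functions of $\xi_B$, hence Gaussian, so $\Tr_A(\sigma_{AB})$ is a Gaussian state. This contradicts the assumption that $\Tr_A(\sigma_{AB})$ is non-Gaussian. Hence non-Gaussianity cannot be broadcast.
\end{proof}
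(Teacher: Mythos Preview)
Your proof is correct and follows essentially the same approach as the paper: reduce to the $r(X)=1$ case via Theorem~\ref{t1} and Lemma~\ref{tx1}, then invoke Theorem~\ref{d1} to conclude that the $B$-marginal must be Gaussian. The only quibble is notational --- Theorem~\ref{d1} applies to $\mathcal{N}=\Tr_B\circ\mathcal{E}$ (the channel with the non-Gaussian fixed point), so it is $\mathcal{N}^c_{\rho_A}$ that is guaranteed Gaussian rather than ``$\mathcal{E}^c_{\rho_A}$'' --- but your direct characteristic-function computation uses precisely the $s_2$ block of the $A\,|\,BC$ partition that Theorem~\ref{d1} (via Lemma~\ref{betax}) controls, so the argument is sound regardless of the label.
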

\begin{proof}
    Suppose broadcasting non-Gaussianity with a Gaussian channel $\mathcal{E}$ were possible. Then for some non-Gaussian state $\rho_A$ we would have $\mathcal{E}(\rho_A)=\tr_C(\hat{U}_{S}(\rho_A\otimes\gamma_{BC})\hat{U}_{  S}^\dagger)=\sigma_{AB}$ such that the Gaussian channel $\mathcal{L}=\tr_B\circ\mathcal{E}$ has a fixed point $\rho_A$, $\sigma_B\notin \mathcal{F}_G$ and $\gamma_{BC}$ is Gaussian. For this channel we would have a pair of $(X,Y)$ that defines action of this Gaussian channel and a value $r(X)$. Now theorem \ref{t1} proves that if $r(X)<1$ we fail to have a channel that can admit such a fixed point. Similarly, lemma \ref{tx1} says we fail to admit non-Gaussian fixed point if $r(X)>1$. Now for $r(X)=1$ we can admit such a fixed point but then by theorem \ref{d1}, the complementary channel $\mathcal{L}^c_\rho$ is Gaussian which maps to the space $BC$, because tracing out is Gaussian then this implies that $\tr_C\circ\mathcal{L}^c_\rho$ is also Gaussian. Hence, $\sigma_B=\tr_C\circ\mathcal{L}^c_\rho(\gamma_{BC})$ is a Gaussian state, yet broadcasting requires $\sigma_B$ to be non-Gaussian; this explicit conflict establishes the contradiction.
\end{proof}
More generally, following the ideas of irreversibility and degradation of asymmetry \cite{Asy1}, we can define the notion of irreversibility and degradation of non-Gaussianity. A state conversion $\rho_A\to\sigma_{A'}$ is reversible if there exists a pair of Gaussian operations $(\mathcal{E},\mathcal{R})$ such that $\mathcal{E}(\rho_A)=\sigma_{A'}$ and $\mathcal{R}(\sigma_{A'})=\rho_A$. Otherwise, we say that the non-Gaussianity of $\rho_A$ is degraded under the conversion. Throughout, we require that conversions occur via Gaussian operations. In light of degradability, we state the following theorem:

\begin{theorem}\label{thrm3}
    Consider two systems $A$ and $B$ jointly prepared in the state $\rho_A\otimes\rho_B$. Let them interact via a Gaussian operation $\Lambda$, and let $\mathcal{G}$ be the local map on $A$ induced by $\Lambda$ for a fixed state on $B$. If $\mathcal{G}$ is non-Gaussian, then for some state of $A$, the induced conversion on $B$ is irreversible (see Figure \ref{figt}).
\end{theorem}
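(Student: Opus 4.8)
The plan is to recast the statement of Theorem~\ref{thrm3} as a contrapositive reversibility claim and then reduce it to the structural result about complementary channels already established in Theorem~\ref{d1}. Concretely, suppose the conversion induced on $B$ were reversible for \emph{every} state of $A$. The interaction $\Lambda$ acting on $\rho_A \otimes \rho_B$ can be dilated as $\Lambda(\cdot) = \Tr_C(\hat U_S(\cdot \otimes \gamma_C)\hat U_S^\dagger)$ with $\hat U_S$ Gaussian; fixing the state on $B$ to be (the relevant reduced) Gaussian and bundling $B$ together with the ancilla $C$ into a combined environment $E \equiv BC$, the local map on $A$ is exactly a Gaussian channel $\mathcal G = \Tr_E\circ\,\hat U_S(\cdot \otimes \gamma_E)\hat U_S^\dagger$ in the Stinespring form used throughout the paper. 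This channel has an associated pair $(X,Y)$ via \eqref{x2}, and the induced conversion on $B$ is precisely (a partial trace of) the complementary channel $\mathcal G^c_{\rho_A}$.

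First I would observe that if $\mathcal G$ is non-Gaussian, then it cannot be that $\mathcal G$ acts trivially, and in particular one can arrange (by choosing $\rho_A$ appropriately, e.g.\ a non-Gaussian state that must be preserved under iteration as in the broadcasting argument) a situation where the only way reversibility can hold for all input states of $A$ is if $\mathcal G$ itself is essentially invertible by a Gaussian operation on the $A$ side — but a Gaussian-invertible Gaussian channel is a Gaussian unitary, contradicting non-Gaussianity of $\mathcal G$ unless the nontrivial part of the dynamics is pushed entirely into $B$. The core step is then to show that "reversibility of the $B$-conversion for all states of $A$'' forces the complementary channel $\mathcal G^c_{\rho_A}$ to be Gaussian for the relevant $\rho_A$, and then to invoke (the proof technique of) Theorem~\ref{d1}: a Gaussian complementary channel together with the dilation structure implies that the map $\mathcal G$ itself is Gaussian, since the non-Gaussian content of $\rho_A$ that would have to be deposited in $B$ cannot survive. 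More precisely, I would run the same $\chi$-function manipulation as in the proof of Theorem~\ref{d1}: write $\chi(\xi;\mathcal G^c_{\rho_A}(\omega)) = \chi(s_2\xi;\rho_A)\,\chi(s_4\xi;\omega)$, and argue that reversibility of this map (it must be undone by a Gaussian operation to recover $\rho_B$ from the output, for every $\rho_A$, hence in particular the dependence on $\rho_A$ through $\chi(s_2\xi;\rho_A)$ must be Gaussian) pins $\chi(s_2\xi;\rho_A)$ to a Gaussian form on $\mathrm{im}(s_2) = \mathrm{im}(Y)$. Feeding this back through \eqref{x2} and Lemma~\ref{betax}, one concludes $\mathcal G$ transforms every characteristic function by a Gaussian reweighting plus a linear coordinate change, i.e.\ $\mathcal G$ is Gaussian — the desired contradiction.

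The main obstacle I anticipate is making rigorous the jump from "the $B$-conversion $\mathcal G^c_{\rho_A}$ is reversible by Gaussian operations'' to "$\chi(s_2\xi;\rho_A)$ is Gaussian on $\mathrm{im}(Y)$''. Reversibility only guarantees the existence of a Gaussian recovery $\mathcal R$ with $\mathcal R\circ\mathcal G^c_{\rho_A} = \mathrm{id}$ on the chosen state(s) of $B$ (or on a suitable domain), which is weaker than $\mathcal G^c_{\rho_A}$ being a Gaussian channel outright; one has to exploit that this must hold for \emph{every} choice of the $A$-input (the quantifier "for some state of $A$, the induced conversion on $B$ is irreversible'' in the theorem is the negation of "for all states of $A$ it is reversible'') to extract enough information. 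The cleanest route is probably to pick a \emph{fixed} non-Gaussian $\rho_A$ that is a fixed point of $\Tr_B\circ\mathcal G$-type dynamics so that the $r(X)=1$ analysis of Theorem~\ref{d1} applies verbatim, then show reversibility on $B$ is incompatible with $\sigma_B$ being non-Gaussian exactly as in the proof of Theorem~\ref{thrm2}; the residual work is checking that the "$\mathcal G$ non-Gaussian'' hypothesis indeed produces such a $\rho_A$ (otherwise $\mathcal G$ would be Gaussian), which is a short argument using that Gaussian operations are closed and that a channel mapping all Gaussians to Gaussians with no non-Gaussian "witness'' in its range is Gaussian. I would defer the full details of this reduction to an appendix, mirroring the structure of the proofs of Theorems~\ref{d1} and \ref{thrm2}.
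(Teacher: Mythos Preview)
Your approach has a structural confusion and misses the direct reduction the paper uses. You write that by ``fixing the state on $B$ to be (the relevant reduced) Gaussian'' one obtains $\mathcal G$ as a Gaussian channel in Stinespring form --- but this is inconsistent with the hypothesis. The map $\mathcal G(\cdot) = \Tr_B\bigl(\Lambda(\cdot\otimes\rho_B)\bigr)$ is built from the Gaussian operation $\Lambda$ and the \emph{fixed} state $\rho_B$; if $\rho_B$ were Gaussian then $\mathcal G$ would automatically be a Gaussian channel, contradicting the premise that $\mathcal G$ is non-Gaussian. So $\rho_B$ is necessarily non-Gaussian, and your attempt to bundle it into a Gaussian environment $E\equiv BC$ to invoke the $(X,Y)$ machinery and Lemma~\ref{betax} cannot get off the ground. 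For the same reason, the fixed-point analysis of Theorem~\ref{d1} does not apply: that theorem concerns Gaussian channels with non-Gaussian fixed points, whereas here $\mathcal G$ is not Gaussian and there is no fixed-point structure in the problem at all.

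The paper's proof is a one-line reduction to Theorem~\ref{thrm2} rather than to Theorem~\ref{d1}. Since $\mathcal G$ is non-Gaussian, there is a \emph{Gaussian} input $\rho_A$ with $\mathcal G(\rho_A)$ non-Gaussian (by definition of a non-Gaussian channel), and as noted above $\rho_B$ must be non-Gaussian. If the $B$-conversion for this $\rho_A$ were reversible by some Gaussian $\mathcal R$, then the Gaussian map $\rho_B \mapsto (\mathcal I_A\otimes\mathcal R)\circ\Lambda(\rho_A\otimes\rho_B)$ (attach the Gaussian $\rho_A$, apply $\Lambda$, apply $\mathcal R$ on $B$) outputs a bipartite state whose $B$-marginal is $\rho_B$ and whose $A$-marginal is $\mathcal G(\rho_A)$, which is non-Gaussian. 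That is a broadcasting protocol for $\rho_B$, contradicting Theorem~\ref{thrm2}. No characteristic-function manipulation, no spectral-radius casework, and no fixed point is needed; the heavy lifting was already done inside Theorem~\ref{thrm2}.
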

\begin{proof}
The proof follows from theorem-\ref{thrm2}, and proceeds by contradiction. The map $\mathcal{G}(.)$ is defined as $\mathcal{G}(.)=\tr_B(\Lambda(.\otimes\rho_B))$. If $\mathcal{G}$ is non-Gaussian, then by definition there exists some Gaussian state $\rho_A$ such that $\mathcal{G}(\rho_A)$ is non-Gaussian. Assume there exists a recovery channel $\mathcal{R}$, then for any non-Gaussian $\rho_B$, the composition channel $\mathcal{R}\circ\Lambda$ would broadcast the non-Gaussianity of $\rho_B$, producing both $\rho_B$ and $\mathcal{G}(\rho_A)$ as non-Gaussian. This contradicts Theorem \ref{thrm2}. Hence, $\mathcal{R}$ cannot exist, and the conversion induced on $B$ is irreversible.
\end{proof}
\begin{figure}[h]
\includegraphics[scale=0.5]{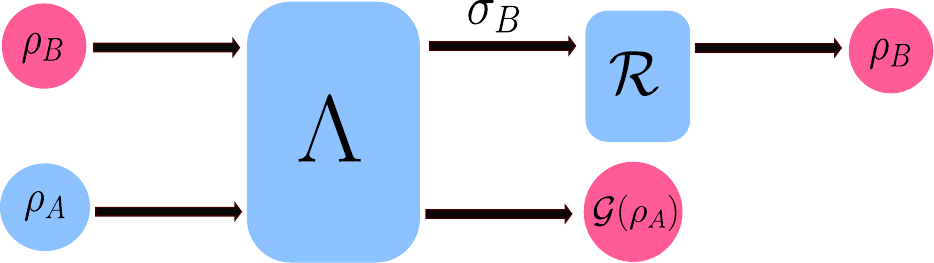}
\caption{\label{figt} The setting for Theorem-\ref{thrm3}. If, using a Gaussian operation $\mathcal{R}$, the state $\rho_B$ can be recovered from $\sigma_B$, then the effective local operation $\mathcal{G}$ is Gaussian and can therefore be implemented without having access to $\rho_B$. The figure shows the contradiction that we can broadcast $\rho_B$ if this is not the case. }
\end{figure}
Finally, from the relation  $\mathcal{F}_G\subsetneq \mathcal{F}^c_G\subsetneq \mathcal{F}_W$ and the fact that the set of Gaussian operations $\mathcal{O}$ preserve all three sets, we can readily conclude that \textit{Gaussian operations cannot broadcast quantum non-Gaussianity as well as Wigner negativity}. If either of these resources could be broadcast, then the associated non-Gaussianity would also be broadcast, contradicting Theorem \ref{thrm2}. However, it remains an open question whether these resources can be broadcast when considering the maximal set of resource non-generating operations $\mathcal{O}_{RNG}$ which is expected to be larger than the Gaussian set. Since $\mathcal{O}_{RNG}$ is not well characterized for either $\mathcal{F}^c_G$ or $ \mathcal{F}_W$, we defer such questions to future work.

\section{Conclusion and outlook}
In summary, we have established that non-Gaussian quantum states cannot be broadcast using Gaussian operations. This no-go theorem extends the family of quantum no-broadcasting results to the continuous-variable domain and reveals an intrinsic limit on the manipulation of non-classical states: one cannot duplicate or distribute the non-Gaussian resource content of a state without invoking non-Gaussian dynamics. Our findings underscore that Gaussian operations – despite their ease of implementation – are fundamentally limited in their ability to propagate non-classical continuous-variable features. In essence, if one mode gains non-Gaussianity through a Gaussian interaction, another mode must lose an equivalent amount.

Our results carry implications for quantum information processing. For example, in continuous-variable quantum computing, one might hope to take a highly non-Gaussian state (such as a GKP sensor state \cite{gkp} with negative Wigner function) and share its non-Gaussian advantages among multiple computing modules using only Gaussian channels (beam splitters, squeezers, etc.). Our no-broadcasting theorem shows that such strategies will fail: at most one module can end up with a non-Gaussian state if the process is Gaussian. Any multi-party distribution of non-Gaussian resources must involve non-Gaussian operations at some stage. 

On the theoretical side, our approach demonstrates how techniques from Gaussian channel theory and control theory (Lyapunov stability analysis) can yield powerful constraints on quantum resource dynamics. It would be interesting to explore whether partial or approximate broadcasting of non-Gaussianity might be possible under certain relaxed conditions (for instance, allowing a small injection of non-Gaussian noise, or aiming only to broadcast a limited degree of non-Gaussian character). Another promising direction is to quantify the trade-off of non-Gaussianity between subsystems more precisely, perhaps by developing new monotones or operational measures of non-Gaussian resource flow. We hope that our work stimulates further investigation into the interplay between Gaussian processes and non-Gaussian resources, and helps inform the design of continuous-variable quantum technologies where non-Gaussian states serve as key ingredients.

\begin{acknowledgments}
K.C. thanks Niklas Budinger and Tanmoy Pandit for useful discussions. We acknowledge support from the Danish National Research Foundation (bigQ, DNRF0142) and EU ERC (ClusterQ, grant agreement no. 101055224, ERC-2021-ADG).
\end{acknowledgments}

\appendix

\section{Properties of characteristic function and Wigner function}\label{appen1}
Although the equations \ref{chip2} and \ref{chip3} are well known in the CV community yet we give small self-contained proofs for them.
\begin{proof}
    The main ingredient in the proof is that for the Weyl displacement operator we have:
    \be
    \begin{split}
    \hat{D}(\xi_A,\xi_B)&=\exp\left({i(\hat{x}_A, \hat{x}_B)(\xi_A,\xi_B)}\right)\\
&=\exp\left({i(\hat{x}_A\xi_A+\hat{x}_B\xi_B)}\right)\\
&= \hat{D}(\xi_A)\otimes\hat{D}(\xi_B)
\end{split}
    \ee
Now using this it is easy to see that $\chi\left({  \xi}_A,{  \xi}_B;\rho_A\otimes\rho_B\right)=\tr(\rho_A\otimes\rho_B \hat{D}(\xi_A)\otimes\hat{D}(\xi_B))=
\chi\left({  \xi}_A;\rho_A\right)\chi\left({  \xi}_B;\rho_B\right)$. Similarly, $\chi\left(0,{  \xi}_B;\rho_{AB}\right)=\tr(\rho_{AB}\calI\otimes\hat{D}(\xi_B))=\tr_B(\rho_B\hat{D}(\xi_B))=\chi(\xi_B;\rho_B)$
\end{proof}
The set of Gaussian operations maps Gaussian states to themselves. Now for any state $\rho\in \mathcal{F}^c_G$ and any Gaussian operation $\mathcal{G}$ we have $\mathcal{G}(\rho)=\sum_i p_i \mathcal{G}(\rho_i)$ where each $\rho_i$ is a Gaussian state and hence $\mathcal{G}(\rho)\in \mathcal{F}^c_G$ which shows that the set of Gaussian operations ($\mathcal{O}$) keeps $\mathcal{F}^c_G$ invariant. However, the argument that $\mathcal{O}$ keeps $\mathcal{F}_W$ invariant is not straightforward, and we could not find a direct proof in the literature; hence, we give it below:
\begin{proof}
    Given that we know how the characteristic function transforms under general Gaussian operations (equation-\ref{chip1x}), we find the transformation for the Wigner function:
    \be
    \begin{split}
        W(x;\mathcal{E}(\rho))\propto \int d^{2n}r\,e^{-i r^{T} x}
 e^{-\tfrac14 r^{T}Y\,r}
 \chi(X r;\rho). 
    \end{split}
    \ee
Now we can substitute $\chi(X r;\rho)=
\int d^{2n}y\,
 e^{i r^{T}X^T y}\,
 W(y;\rho)$. This gives
\be\label{app1}
W(x;\mathcal E(\rho))=
\int d^{2n}y\;W(y;\rho)\,G_{X,Y}(x\!\mid\!y),
\ee
with the kernel $G_{X,Y}(x\!\mid\!y)\propto
\int d^{2n}r\,
\exp\!\bigl[-\tfrac14 r^{T}Y\,r
 -i r^{T}\,(X^Ty-x)\bigr]$. This kernel for $Y=0$ collapses to $G_{X,0}(x\!\mid\!y)=\delta(X^Ty-x)$ and hence, $W(x;\mathcal E(\rho))=W(X^{-T}x;\rho)$ which is positive whenever $W(x;\rho)$ is positive. For general $Y\geq 0$ if we define $v=X^Ty-x$ then we have: 
\be
\begin{split}
G_{X,Y}(x\!\mid\!y)&\propto e^{-v^TY^{-1}v} \delta(\Pi_{ker}(Y)(v))
\end{split}
\ee
Now this is a Gaussian probability density kernel modulated with a Dirac delta function and hence always positive, and so we see that by equation-\ref{app1}, $W(x;\mathcal E(\rho))$ is always positive whenever $W(y;\rho)$ is positive and $\mathcal{E}$ is a general Gaussian operation.
\end{proof}
\section{Proof of Theorem 1}\label{appen2}
\begin{proof}
     Schur stability of a matrix $X$ means that the spectral radius of the matrix is less than one. For any Schur stable matrix and any real vector $\Vec{r}$, we have
    \be\label{sch}
    \lim_{n\to\infty} X^n(\Vec{r})=0
    \ee
Now, a single application of the channel maps $V\to X^TVX+Y$, and using mathematical induction, it is straightforward to prove that $m$ successive applications of the channel maps
\be
V\to X^{Tm} V X^{m}+\sum_{n=0}^{m-1}X^{Tn} Y X^{n}
\ee
This means that the characteristic function changes as
\be
\chi(\xi;\mathcal E^{\otimes m}(\rho))=\exp(-\frac{1}{4} \xi^T\tilde{Y} \xi)\chi(X^m(\xi);\rho)
\ee
where $\tilde{Y}=\sum_{n=0}^{m-1}X^{Tn} Y X^{n}\geq 0$. Now taking $m\to\infty$ and using the equation \ref{sch} we see
\be
\begin{split}
\chi(\xi;\mathcal E^{\otimes m}(\rho))&=\exp(-\frac{1}{4} \xi^T\tilde{Y} \xi)\chi(0;\rho)\\
&=\exp(-\frac{1}{4} \xi^T\tilde{Y} \xi)\\
&=\chi(\xi;\rho)
\end{split}
\ee
where the second last equality is the normalization condition and the last equality means that $\rho$ is a fixed point for the channel $\mathcal{E}$. But now we see that any admissible fixed point must have a Gaussian characteristic function which completes the proof 
\end{proof}
The condition of having a fixed point 
\be
\chi(\xi;\rho)=\exp(-\frac{1}{4} \xi^T\tilde{Y} \xi)\chi(X^m(\xi);\rho)
\ee
for all $m\geq 0$ also means that for every $\xi\in \mathbb R^{2n}$ $\xi^T\tilde{Y}\xi<\infty$ (in regards to convergence of the sum in $\tilde{Y}$). This is because if there exists a $\xi$ for which this diverges then for $t\xi$ also it will diverge ($t\in\mathbb R$) and hence $\chi(t\xi; \rho)=0$ for any $t$ but by taking $t\to 0$ we would wrongly conclude that $\chi(0;\rho)=0$ because $\chi(\xi;\rho)$ needs to be continuous. \\
\section{Analysis of $r(X)=1$ case}\label{appen3}
Here we give more detailed proof and analysis of the lemma \ref{betax} that we use in our theorem \ref{d1}. What we want to prove is that demanding the sum $G(v)$ (as defined in the main text) to converge for all $v\in im(Y)$ necessarily implies $im(Y)\subseteq E_{<1}$. The other implication $im(Y)\subseteq E_{<1}\implies G(v)<\infty, \forall v\in im(Y) $ is more obvious because over $E_{<1}$ space $X$ acts like a Schur stable matrix \cite{kai,boyd}. 
\begin{proof}
We begin by assuming a contradiction $\exists v\in (im(Y)/\{0\})\bigcap E_{=1}$ such that $v=\sum_{l}v_l$ and the sum $G(v)$ converges. Here $l\in \{+1,-1,\theta\}$ which labels three types of subspaces within $E_{=1}$ that is generalised eigenspaces of $+1,-1$ and of phases that occur in pair $e^{\pm i\theta}$. We first deal with the $l=\pm 1$ blocks over which $X$ will have a Jordan block form of $(\pm I+N)$ with $N$ being nilpotent such that $N^{r+1}=0$. In this block, we have
\be
a_j(v_{l})=||\sum_{k=0}^{r}\binom{j}{k} c_k(l)||^2
\ee
with $c_k(l)=\sqrt{Y}\tilde{u}_k(l)$ with $\tilde{u}_k(l)=(l)^kN^k(v)$. Now, due to the growth of binomial coefficients, one shows that
\be
\lim_{j\to\infty}\frac{\sum_{k<r}\binom{j}{k}||c_k(l)||}{\binom{j}{r}||c_r(l)||}=0
\ee
This means there exists $j\geq J$ such that $\sum_{k<r}\binom{j}{k}||c_k(l)||\leq\frac{1}{2}\binom{j}{r}||c_r(l)||$ with which we bound
\be
\begin{split}
    \sqrt{a_j(v_l)}\geq \binom{j}{r}||c_r(l)||-\sum_{k<r}\binom{j}{k}||c_k(l)||\geq \frac{1}{2}\binom{j}{r}||c_r(l)||
\end{split}
\ee
and this bound proves the divergence of $G(v_l)=\sum_j a_j(v_l)$. Now we look at the rotation blocks where $X=R_\theta=\begin{pmatrix}
    cos(\theta)&-sin(\theta)\\
    sin(\theta)& cos(\theta)
\end{pmatrix}$ and $a_j(v_l,\theta)=v_l^TR_{-j\theta}YR_{j\theta}v_l$. observe that $j=0$ gives $a_0=v_l^TYv_l>0$ because $v_l\notin ker(Y)$. If $\theta=\frac{2\pi q}{p},(q,p)\in\mathbb{N}^2$ then, the sequence $a_j(v_l,\theta)$ is periodic with $a_j(v_l,\theta)=a_{j+p}(v_l,\theta)$ which gives $\sum_{j=0}^{Np-1}a_j\geq Na_0$ which leads to divergence of $G(v_l)$ when $N\to\infty$. For $\theta$ not a rational multiple of $2\pi$ we can approximate the sum with an integral as
\be
\frac{1}{N}\sum_{j=0}^N a_j(v_l,\theta) \to \frac{1}{2\pi}\int d\theta a_j(v_l,\theta)=\frac{\tr(Y)||v_l||^2}{2}
\ee
for large $N$ and we see that this again gives $G(v)$ diverging as linear in $N$. The above arguments show that whenever $v$ has support over any of $E_{=1}$ subspaces, then the quantity $G(v_l)$ diverges, which leads to divergence of $G(v)$, which is in contradiction to what we assumed and hence, $im(v)\subseteq E_{=1}$. 
\end{proof}
The sum $G(v)$ also goes by the name of Gramian and plays an important role in the analysis of detectability and controllability of systems. The detectability of unobservable modes of such systems is possible iff these modes are asymptotically stable \cite{kai}, which means they exist on spaces where $X$ is Schur stable.  
\section{Analysis of super-additivity for entangled cat states}\label{appen4}
As discussed in the main text we are interested in computing $\Delta NG_{\phi_+}$ which by Lemma~\ref{lem1} is same as computing the difference of mutual information $I(\ket{\phi+})-I(\Gamma_{\phi+})$ where $\ket{\phi+}_{AB}=\frac{1}{\sqrt{2}}(\ket{\alpha}\ket{\alpha}+\ket{-\alpha}\ket{-\alpha})$ and $\Gamma_{\phi+}$ is the Gaussian reference state (the closest Gaussian state). Now $I(\ket{\phi+})=2\ln{2}$ because it is maximally entangled. $I(\Gamma_{\phi+})$ can be calculated using the covariance matrix of of $\Gamma_{\phi+}$ which is same as that of $\ket{\phi+}$ and is given by
\be
\Lambda_{\phi+}=\begin{pmatrix} 4\alpha^2+1 & 0& 4\alpha^2 & 0\\
0&1& 0& 4\alpha^2\\
4\alpha^2&0& 4\alpha^2+1& 0\\
0&4\alpha^2&0& 1\end{pmatrix}
\ee
From this we also get the covariance matrix of reduced state as $diag(4\alpha^2+1,1)$ which gives the entropy of $S(\Gamma_{\phi+A})+S(\Gamma_{\phi+B})=2h(\sqrt{4\alpha^2+1})$. Here we have used the well known expression of entropy of Gaussian states as given in \cite{Braunstein_2005}. Now from the global covariance matrix we can find the symplectic eigenvalues which are given by $(\sqrt{1+8\alpha^2},1)$ which gives $S(\Gamma_{\phi+})=h(\sqrt{1+8\alpha^2})$.Overall mutual information $I(\Gamma_{\phi+})=S(\Gamma_{\phi+A})+S(\Gamma_{\phi+B})-S(\Gamma_{\phi+})=2h(\sqrt{4\alpha^2+1})-h(\sqrt{8\alpha^2+1})$.

\bibliography{apssamp}

\end{document}